\newcommand{\be}{\begin{equation}}
\newcommand{\ee}{\end{equation}}
\newcommand{\ba}{\begin{array}}
\newcommand{\ea}{\end{array}}
\newcommand{\bea}{\begin{eqnarray}}
\newcommand{\eea}{\end{eqnarray}}
\definecolor{darkred}  {rgb}{0.5,0,0}
\definecolor{darkblue} {rgb}{0,0,0.5}
\definecolor{darkgreen}{rgb}{0,0.5,0}
\newcommand{\ra}{\rangle}
\newcommand{\la}{\langle}
\newcommand{\calL}{{\cal L }}
\newcommand{\calF}{{\cal F }}
\newcommand{\calS}{{\cal S }}
\newcommand{\calP}{{\cal P }}
\def\blfootnote{\xdef\@thefnmark{}\@footnotetext}
\newtheorem{prop}{Proposition}
\newtheorem{lemma}{Lemma}
\newtheorem{theorem}{Theorem}
\newcommand{\ii}{{i}}
\begin{document}

\title{Superfast encodings for fermionic quantum simulation}

\author{Kanav Setia}
\affiliation{Department of Physics and Astronomy, Dartmouth College, Hanover, NH 03755}
\email{kanav.setia.gr@dartmouth.edu}
\affiliation{IBM T.J. Watson Research Center, Yorktown Heights, NY 10598, US}

\author{Sergey Bravyi} 
\affiliation{IBM T.J. Watson Research Center, Yorktown Heights, NY 10598, US}

\author{Antonio Mezzacapo} 
\affiliation{IBM T.J. Watson Research Center, Yorktown Heights, NY 10598, US}

\author{James D. Whitfield} 
\affiliation{Department of Physics and Astronomy, Dartmouth College, Hanover, NH 03755}

\date{\today}
\begin{abstract}
Simulation of fermionic many-body systems on a quantum computer requires
a suitable encoding of fermionic degrees of freedom into qubits. 
Here we revisit the Superfast Encoding introduced by Kitaev and one of the authors.
This encoding maps a target fermionic Hamiltonian with two-body interactions
on a  graph of degree $d$ to a qubit simulator Hamiltonian composed of Pauli
operators of weight $O(d)$. A system of $m$ fermi modes gets
mapped to $n=O(md)$ qubits. 
We propose Generalized Superfast Encodings (GSE)
which require the same number of qubits as the original one  but have
more favorable properties. First, we describe a GSE such that
the corresponding quantum code
corrects any single-qubit error provided that the
interaction graph has degree $d\ge 6$.
In contrast,  we prove that the original Superfast Encoding 
lacks the error correction property for $d\le 6$.
Secondly, we describe a GSE that  reduces the Pauli weight 
of the simulator Hamiltonian from $O(d)$ to $O(\log{d})$.
The robustness against errors and a simplified structure of the simulator Hamiltonian
offered by GSEs can make simulation of fermionic systems within the reach of near-term quantum devices.
As an example, we apply the new encoding  to  the  fermionic Hubbard model on a 2D lattice. 

\end{abstract}

\maketitle

Quantum error correction is  a vital milestone  on the path
towards scalable quantum computing.
It  enables an arbitrarily long reliable computation with noisy qubits
and imperfect gates, provided that the noise level is below
a constant threshold value, which is close to what can be
achieved in the latest experiments~\cite{barends2014superconducting,corcoles2015}.
Unfortunately, realizing
a computationally universal set of logical gates in a fully fault-tolerant fashion
requires a significant overhead which may be prohibitive for
near-term quantum devices. This has lead several groups
to consider a less expensive option known as error 
mitigation~\cite{temme2017error,li2017efficient,li2018practical,otten2018recovering,bonet2018low}.
Error mitigation schemes are usually tailored to a specific quantum algorithm such as
adiabatic quantum computation~\cite{jordan2006error} or
variational optimization~\cite{temme2017error,li2017efficient}.
Although the proposed error mitigation schemes introduce less overhead and can extend
the range of applications for the available quantum hardware~\cite{kandala2018extending}, 
they are not truly scalable and do not offer full
fault-tolerance.

Of particular interest for practical applications 
are error mitigation schemes for 
quantum simulation of  fermionic systems --
a fundamental  problem emerging in the quantum chemistry and 
material science. 
All quantum algorithms for simulation of fermionic systems rely on 
a suitable encoding of fermionic degrees of freedom into qubits.
Notable examples are  the Jordan-Wigner transformation~\cite{Jordan1928}, 
the Verstraete-Cirac mapping~\cite{Verstraete2005},
Fenwick trees~\cite{Bravyi2000,Havlicek2017},
and the parity mapping~\cite{Seeley2012}, see also~\cite{Ball2005,bravyi2017tapering,steudtner2017lowering}.
Such encodings map a target Hamiltonian $H$ describing $m$ fermionic modes to
a simulator Hamiltonian $\tilde{H}$ describing $n$ qubits such that
$VH=\tilde{H}V$ for a suitable unitary map (isometry) $V$.
This ensures that $H$ and $\tilde{H}$ are unitarily equivalent
if one restricts $\tilde{H}$ onto the subspace spanned by encoded states $V|\psi\ra$.

A natural question is whether the encodings proposed for fermionic simulations can also be used for error correction or mitigation. Here we give the affirmative answer, for a generalized version of the Superfast Encoding proposed in Ref.~\cite{Bravyi2000}.
We consider a  system of $m$ fermi modes that live at vertices of some
graph with the maximum vertex degree $d\ll m$.
Edges of the graph represent two-mode interactions 
in the target Hamiltonian $H$.
A distinctive feature of the superfast encodings is that the simulator Hamiltonian
$\tilde{H}$ includes only few-qubit interactions described by Pauli
operators of  weight $O(d)$.  
The encoding requires $n=O(md)$ qubits.
For comparison, the Jordan-Wigner and the Fenwick-tree type of encodings
require $n=m$ qubits and  produce a simulator Hamiltonian with Pauli weight $\Omega(m)$
and $\Omega(\log{m})$ respectively.

Here we propose Generalized Superfast Encodings (GSE) improving the
original  Superfast Encoding in two respects. 
First, we describe a GSE
such that the corresponding  quantum code 
corrects any single-qubit error under mild technical assumptions
about the fermionic interaction graph.
Namely, we assume that the  graph is $3$-connected~\footnote{Recall that a graph
is called $3$-connected if it remains connected after removal
of any pair of vertices.} and 
has vertex degree $d\ge 6$. In contrast, we prove that the original Superfast Encoding  
lacks the error correction property for $d\le 6$. 
The GSE  requires the same number
of qubits  as the original encoding, so the extra error correction
capability comes at no extra qubit cost. Additionally, the GSE produces a simpler local simulator Hamiltonian, with Pauli weights reduced by a factor 2 with respect to the original encoding.

 Secondly, we describe a GSE that 
produces a simulator Hamiltonian with the Pauli weight $O(\log{d})$,
as opposed to the Pauli weight $O(d)$ in the original Superfast Encoding.
Both encodings use the same number of qubits. 
This achieves a significant reduction of the Pauli weight compared to all
previously known encodings in the regime when $d\ll m$. Note that this is
naturally the case of physical systems, where the interactions
have some degree of locality independent of the system size $m$.

We expect that the proposed GSEs will 
find practical applications in simulation of medium-size fermionic systems
 with aim at correcting single-qubit errors that occur in noisy devices. 
Furthremore, reducing Pauli weight of the simulator Hamiltonian is vital for
mitigating readout errors 
in variational quantum algorithms~\cite{peruzzo2014variational,Kandala2017}.

The paper is organized as follows.
We first define the relevant fermionic Hamiltonians 
and review the Superfast Encoding of Ref.~\cite{Bravyi2000}.
Then we introduce GSEs and show that they 
can exponentially reduce the Pauli weight of the simulator Hamiltonian.
We prove that the original Superfast Encoding lacks the error correction property on low-degree 
graphs ($d \leq 6$).
In contrast, we demonstrate that GSEs correct all single-qubit errors
for any $3$-connected  interaction graph with vertex degree at least $6$.
Finally, we elucidate a practical use of GSEs 
by applying them to a Hubbard model on a square lattice. 

{\it Superfast Encoding.}
We start by summarizing the encoding proposed in Ref.~\cite{Bravyi2000}.
Consider a system of $m$ fermionic modes
such that each mode can be either empty or occupied by a fermionic particle. 
Let $a_i^\dag$ and $a_i$ be the particle creation and annihilation operators
for the $i$-th mode.  
They obey the canonical commutation rules
\[
a_i a_j + a_j a_i =0 \quad \mbox{and} \quad a_i a_j^\dag + a_j^\dag a_i =\delta_{i,j}I.
\]
Let $N=\sum_{j=1}^N a_j^\dag a_j$ be the particle number operator.
A fermionic operator $H$ is called {\em even} if it preserves the number
of particles modulo two, that is, $[H,(-1)^N]=0$.
All physical observables are known to be described by even operators.
Let $\calF$ be the algebra of all even operators.

Assume that each mode $i$ can interact only with a few other
modes $j$ that are nearest-neighbors of $i$ on some  graph
$G=(V,E)$ with a set of vertices $V=\{1,2,\ldots,m\}$
and a set of edges $E$. Such system is described by a Hamiltonian
\be
\label{Htarget}
H=\sum_{(i,j)\in E} H_{i,j},
\ee
where $H_{i,j}\in \calF$ acts non-trivially only on the pair of modes $i,j$.
Below we assume that $G$ is a connected graph. 

To define the Superfast Encoding it is convenient to rewrite $H$ in terms
of Majorana operators
\be
c_{2j}= a_{j}+a^{\dagger}_{j} \quad \mbox{and} \quad 
c_{2j+1}= -\ii(a_{j}-a^{\dagger}_{j}).
\ee
These operators are Hermitian and satisfy
\begin{align}
c_{j}c_{k}+c_{k}c_{j}=2\delta_{jk} I \label{eq:Maj_modes}
\end{align}
The algebra of even operators $\calF$ has a set of generators
\begin{align}
	B_j&=-\ii c_{2j}c_{2j+1} \quad \text{for each vertex $j\in V$,}\label{eq:B_i}\\
	A_{jk}&=-\ii c_{2j}c_{2k} \quad \text{for each edge }(j,k) \in E.	\label{eq:A_ij}
\end{align}
For example, fermionic operators 
describing hopping, external field, and a two-body repulsion
can be written as
\[
a_j^\dag a_k + a_k^\dag a_j = (-i/2)A_{j,k} (B_j - B_k),
\]
\[
a_j^\dag a_j = (1/2)(I-B_j), \quad
a_j^\dag a_j a_k^\dag a_k = (1/4)(I-B_j)(I-B_k).
\]
Any parity-preserving fermionic operator belongs to the subalgebra generated by $A_{j,k}$, $B_j$. An explicit derivation of two-body quantum chemistry and superconductivity interactions can be found in~\cite{Setia2017}, and in Appendix~\ref{app:ferm_op_table}.

The operators
$A_{i,j}$ and $B_i$ obey commutation rules
\begin{align}
 & B^{\dagger}_{i}=B_{i},\qquad \qquad \quad \quad  \qquad A^{\dagger}_{ij}=A_{ij}, \label{CR1}\\
& B_{i}^{2}=1,\qquad \qquad \quad \qquad \qquad A_{ij}^{2}=1, \\
 &B_{i}B_{j}=B_{j}B_{i}, \qquad \quad \qquad \quad  A_{ij}=-A_{ji} \\
 &\quad \quad A_{ij}B_{k}=(-1)^{\delta_{ik}+\delta_{jk}}B_{k}A_{ij}, \\
 &\quad A_{ij}A_{kl}=(-1)^{\delta_{ik}+\delta_{il}+\delta_{jk}+\delta_{jl}}A_{kl}A_{ij} \label{CR2} \\
 &\quad i^s A_{\zeta(0),\zeta(1)} A_{\zeta(1),\zeta(2)} \cdots A_{\zeta(s-1),\zeta(0)}=I.  \label{eq:stab_relation}
 \end{align}
In the last equation $\zeta$ is any closed loop of length $s$
in the graph $G$
that consists of vertices $\zeta(0),\zeta(1),\ldots,\zeta(s)=\zeta(0)\in V$.
Following Ref.~\cite{Bravyi2000} we shall impose one extra rule
\be
\label{CR3}
 \prod_{i\in V} B_i = I.
\ee
Note that $\prod_{i\in V} B_i=(-1)^N$. Thus Eq.~(\ref{CR3})
corresponds to restricting the Fock space of $m$ fermi modes to 
the subspace with even number of particles. Note that all elements of the 
algebra $\calF$ preserve this subspace.

To define the simulator Hamiltonian $\tilde{H}$
let us place a qubit at every edge of the graph $G$.
The total number of qubits is 
\be
\label{n}
n=|E|=(1/2) \sum_{i\in V} d(i)
\ee
where $d(i)$ is the degree of a vertex $i$.
Let $X_{ij}$, $Y_{ij}$, and $Z_{ij}$ be the Pauli operators acting on the
edge $(i,j)\in E$. We shall assume that edges incident to each vertex $i$
are labeled by integers $1,\ldots,d(i)$. 
The corresponding ordering of edges incident to $i$
will be denoted $<_i$.
We shall also  assume that every edge $(i,j)$
is oriented. Define $\epsilon_{i,j}=1$ if $i$ is the head and $\epsilon_{i,j}=-1$
if $i$ is the tail of the edge $(i,j)$.
Qubit counterparts of the operators $B_j$ and  $A_{j,k}$  are defined as 
\begin{equation}
\tilde B_j=\prod_{k: (j,k)\in E} Z_{jk} \label{tilde_B_i},
\end{equation} 
\begin{equation}
\tilde A_{jk}=\epsilon_{jk} X_{jk}\prod_{p\, :\, (j,p)<_j(j,k)}\; Z_{jp} \prod_{q\, : (k,q)<_k(k,j)} \; Z_{kq}\label{tilde_A_ij}.
\end{equation}
It can be checked that these operators satisfy commutation rules
analogous to  Eqs.~(\ref{CR1}-\ref{CR2}) and Eq.~(\ref{CR3}). 
However, the  rule Eq. ~(\ref{eq:stab_relation}) does not hold on the full Hilbert space of $n$ qubits.
This rule can be satisfied by restricting the operators $\tilde{A}_{i,j}$ and $\tilde{B}_i$
on a suitable subspace.
For each closed loop $\zeta$ as above  define
a loop operator
\begin{align}
\label{LoopOperator}
\tilde{A}(\zeta)\equiv  i^s \tilde{A}_{\zeta(0),\zeta(1)} \tilde{A}_{\zeta(1),\zeta(2)} \cdots 
&\tilde{A}_{\zeta(s-1),\zeta(0)}.
\end{align}
Recall that $s$ is the length of $\zeta$.
It can be readily checked that $\tilde{A}(\zeta)$ commutes with 
all operators  $\tilde{A}_{i,j}$ and $\tilde{B}_i$. Furthermore, 
loop operators  pairwise commute.
Let $\calS$ be the abelian group generated by the loop operators $\tilde{A}(\zeta)$.
In Appendix~\ref{app:loop} we show that  $-I\notin \calS$. Thus $\calS$ can be viewed
as a stabilizer group of a quantum code with the codespace
\begin{align}
\label{logical}
\calL = \{  \ket{\psi}\, : \, \tilde{A}(\zeta)|\psi\ra=|\psi\ra \quad \mbox{for all loops $\zeta$} \}.
\end{align}			
The number of independent stabilizers coincides with the number of 
independent loops in the graph which is known to be $s=|E|-|V|+1=n-m+1$.
It follows that the code $\calS$ encodes $k=n-s=m-1$ logical qubits
into $n$ physical qubits, that is, $\mathrm{dim}(\calL)=2^{m-1}$.
The codespace $\calL$ can be identified with the even-parity subspace
of the fermionic Fock space. Furthermore, the restrictions of
qubit operators $\tilde{A}_{i,j}$ and $\tilde{B}_i$ onto $\calL$
can be identified with the fermionic operators
$A_{i,j}$ and $B_i$ restricted onto the even-parity subspace.
We can now define a simulator Hamiltonian $\tilde{H}$ by 
replacing the operators $A_{i,j}$ and $B_i$ in the expansion of each term $H_{i,j}$
by their qubit counterparts $\tilde{A}_{i,j}$ and $\tilde{B}_i$.
One can easily check that $\tilde{H}$ is composed of Pauli operators
of weight at most $2d$.

{\it Generalized Superfast Encodings (GSE). }
Consider the target Hamiltonian Eq.~(\ref{Htarget}).
Below we assume that the interaction graph $G=(V,E)$ is connected and has only even
degree vertices. Let us place
$d(i)/2$ qubits at each vertex $i$. 
The total number of qubits $n$ is given by Eq.~(\ref{n}).
Let $\calP_i$ be the group of Pauli operators acting on the  qubits
located at a vertex $i\in V$
tensored with the identity on the remaining qubits. 
A GSE
is defined in terms of {\em local Majorana modes} 
\be
\label{LMM}
\gamma_{i,1},\gamma_{i,2},\ldots,\gamma_{i,d(i)} \in \calP_i.
\ee
By definition, $\gamma_{i,p}$ is a  Pauli operator acting non-trivially 
only on the qubits located at the vertex $i$. 
We require that the operators $\gamma_{i,p}$ generate the full Pauli group
$\calP_i$ and 
obey the usual Majorana commutation rules 
\be
\label{localMajorana}
\gamma_{i,p}^\dag = \gamma_{i,p}, \quad 
\gamma_{i,p} \gamma_{i,q} + \gamma_{i,q} \gamma_{i,p} = 2\delta_{p,q} I
\ee
for all $i\in V$ and $1\le p,q\le d(i)$.  
Otherwise, $\gamma_{i,p}$ can be completely arbitrarily.
Hence a GSE is non-unique.
Note that  local Majorana modes located at different vertices commute with each other
because they act on disjoint subsets of qubits. 
Assume that the
neighbors of each vertex $i$ are labeled by integers $1,\ldots,d(i)$
and denote the $p$-th neighbor  of $i$ as $N(i,p)$.
Define qubit counterparts of the operators $B_j$ and $A_{j,k}$ as 
\be
\label{tildeBgen}
\tilde{B}_j =(-i)^{d(j)/2} \gamma_{j,1} \gamma_{j,2} \cdots \gamma_{j,d(j)}
\ee
and
\be
\label{tildeAgen}
\tilde{A}_{j,k}=\epsilon_{j,k} \gamma_{j,p} \gamma_{k,q}
\ee
where the integers $p,q$ must satisfy
\[
k=N(j,p) \quad \mbox{and} \quad j=N(k,q).
\]
In other words, $k$ is the $p$-th neighbor of $j$ while $j$ is the $q$-th neighbor of $k$,
see Fig.~\ref{fig:LMM}.
\begin{figure}[t!]
	\includegraphics[width=6cm]{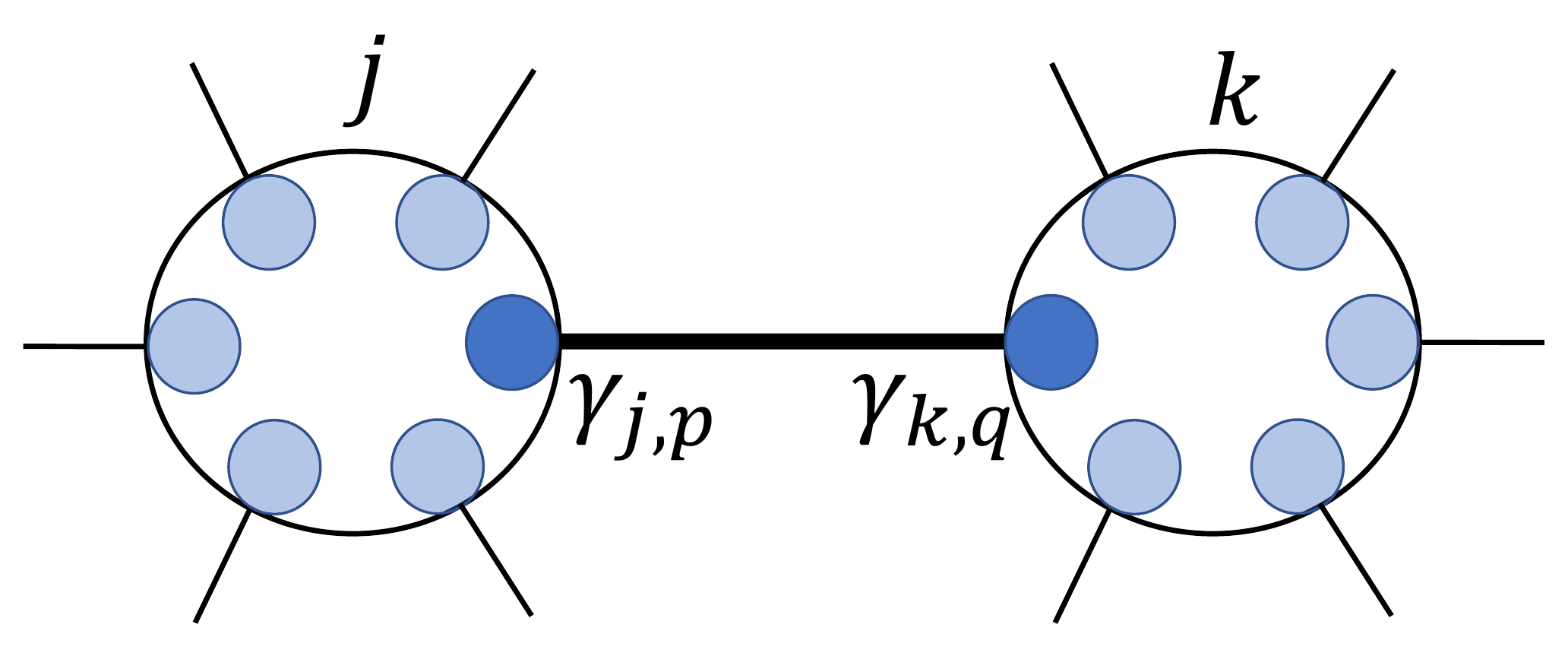}
	\caption{Local Majorana modes for nearest-neighbor vertices $j,k$.
	We define $\tilde{A}_{j,k}=\epsilon_{j,k} \gamma_{j,p}\gamma_{k,q}$,
	where $\epsilon_{j,k}=\pm 1$ is the edge orientation. The operator
	$\tilde{B}_j$ is proportional to the product of all  local Majorana
	modes $\gamma_{j,p}$ located at the vertex $j$.}
    \label{fig:LMM}
\end{figure}
One can check that $\tilde{B}_i$ and $\tilde{A}_{i,j}$ obey the commutation rules
analogous to  Eqs.~(\ref{CR1}-\ref{CR2}).
To satisfy the remaining rules Eqs.~(\ref{eq:stab_relation},\ref{CR3}) 
consider an abelian group $\calS$ generated by the loop
operators $\tilde{A}(\zeta)$ 
and a codespace $\calL$ stabilized by $\calS$ as defined
in Eqs.~(\ref{LoopOperator},\ref{logical}).
In Appendix~\ref{app:loop} we show that $-I\notin \calS$ and 
the codespace  $\calL$ has dimension $2^{m-1}$.  
Recall that $G$  is assumed to be a connected even-degree graph.
It is a well-known fact any such graph has an Eulerian cycle -- a closed loop $\zeta$
that uses every edge of $G$ exactly once. 
The corresponding loop operator
has the form 
$\tilde{A}(\zeta)=\pm \prod_{i\in V} \tilde{B}_i$, where the sign
depends on the choice of edge  orientations $\epsilon_{j,k}$.
Thus the product of all operators $\tilde{B}_i$ is in the stabilizer group
$\calS$
for a suitable choice of $\epsilon_{j,k}$.
We conclude that  the restrictions of operators $\tilde{A}_{j,k}$ and $\tilde{B}_j$
onto $\calL$ satisfy the same commutation rules as the respective
fermionic operators $A_{j,k}$ and $B_j$ restricted to the even-parity subspace
of the Fock space.
We can now define a simulator Hamiltonian $\tilde{H}$ by 
replacing the operators $A_{i,j}$ and $B_i$ in the expansion of each term $H_{i,j}$
by their qubit counterparts $\tilde{A}_{i,j}$ and $\tilde{B}_i$.

Next let us describe a specific choice of the local Majorana  modes
$\gamma_{i,p}$ that results in a simulator Hamiltonian $\tilde{H}$
with the Pauli weight $O(\log{d})$.
Consider a vertex $i\in V$ and a system
of $d(i)$ Majorana modes $\gamma_1,\ldots,\gamma_{d(i)}$.
Let $\tilde{\gamma}_p$ be the Fenwick tree encoding~\cite{Bravyi2000,Havlicek2017}
of $\gamma_p$. As was shown in Ref.~\cite{Havlicek2017},
$\tilde{\gamma}_p$ is a Pauli operator of weight at most
$\lceil \log_2{d(i)}\rceil$.
Choose $\gamma_{i,p}$ as a tensor product of $\tilde{\gamma}_p$
on the vertex $i$ and the identity operator on all other vertices.
Then  $\tilde{A}_{i,j}$ has Pauli weight 
at most $2\lceil \log_2{d}\rceil$, see Eq.~(\ref{tildeAgen}).
The Fenwick tree 
encoding maps the fermionic parity operator $\gamma_1 \gamma_2 \cdots \gamma_{d(i)}$ to
a single-qubit Pauli $Z$, see~\cite{Havlicek2017}.
Hence $\tilde{B}_i$ has Pauli weight $1$, see Eq.~(\ref{tildeBgen}).
We conclude that $\tilde{H}$ has Pauli weight at most $2\lceil \log_2{d}\rceil$.

{\it Lack of error correction in the Superfast Encoding. }
Let us first discuss error correcting properties of the original Superfast Encoding.
Recall that a Pauli operator $P$ is said to be 
a logical operator for a quantum code with a stabilizer group $\calS$ if
$P$ commutes with all elements of $\calS$ 
and the restriction of $P$ onto the logical subspace $\calL$ is a
non-trivial operator. 
A code is said to correct single-qubit errors if any logical operator $P$
has weight at least three (i.e. $P$ acts nontrivially on at least three qubits).
Let us show now that 
the stabilizer code defined through the Superfast Encoding
fails to correct all single-qubit errors (regardless of how one chooses
edge ordering).

Suppose first that the interaction graph $G$ has a vertex $i$ with degree $d(i)\le 4$.
Note that  $\tilde{A}_{i,j}$ and $\tilde{A}_{i,j}\tilde{B}_j$  are logical operators
of the code $\calS$ for any $(i,j)\in E$.
We claim that at least one of these logical operators  has weight $1$ or $2$.
Indeed,  let
$(i,j)$ be the first edge incident to $i$ according to the ordering $<_i$.
Let $e(1),\ldots,e(d)$ be the edges incident to $j$ listed according to the ordering $<_j$.
Here $d\equiv d(i)\le 4$.
Suppose $(i,j)$ is the $p$-th edge incident to $j$,
that is, $(i,j)=e(p)$.
Eq.~(\ref{tilde_A_ij}) gives
\[
\tilde{A}_{i,j}=\epsilon_{i,j} X_{ij}Z_{e(1)}\cdots Z_{e(p-1)}.
\]
If $p\le 2$ then $\tilde{A}_{i,j}$ has weight $1$ or $2$.
Otherwise,
if $p\ge 3$, then $\tilde{A}_{i,j}\tilde{B}_j \sim Y_{ij}Z_{e(p+1)} \cdots Z_{e(d)}$ 
has weight $1$ or $2$. Thus
the stabilizer code $\calS$ fails to correct all single-qubit errors
regardless of how one orders the edges. 
In Appendix~\ref{app:lackEC} we extend this argument to more general graphs
and prove the following.
\begin{lemma}
\label{lemma:lackEC}
Suppose the interaction graph $G$ 
has degree $d$ for each vertex $i$.
If $d\le 6$ then 
the Superfast Encoding does not correct all single-qubit errors.
\end{lemma}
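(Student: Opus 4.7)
The plan is to extend the argument from the main text (which handles $d\leq 4$) to $d=5$ and $d=6$ by (i) including all four single-edge logicals $\tilde{A}_{jk}\tilde{B}_j^a\tilde{B}_k^b$ with $a,b\in\{0,1\}$ in the case analysis, (ii) running a global counting argument on position slots for $d=5$, and (iii) constructing a two-edge logical for $d=6$. A direct expansion of Eqs.~(\ref{tilde_B_i},\ref{tilde_A_ij}) shows that if the edge $(j,k)$ sits at position $p$ at $j$ and $q$ at $k$ with $d(j)=d(k)=d$, these four logicals have Pauli weights $p+q-1$, $d-p+q$, $p+d-q$, and $2d-p-q+1$. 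Requiring all four to be $\geq 3$ yields the position constraints $p+q\geq 4$, $|p-q|\leq d-3$, and $p+q\leq 2d-2$.

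For $d=5$ these constraints force any edge at position $1$ (resp.\ position $5$) at some endpoint to have position $3$ at the other endpoint, while simultaneously ruling out edges carrying position $1$ at both endpoints, position $5$ at both endpoints, or positions $1$ and $5$ at the two endpoints. Hence the sets $E_1,E_5$ of edges carrying position $1$ or position $5$ at some endpoint are disjoint and each of size $m$, so $|E_1\cup E_5|=2m$. Each such edge occupies the position-$3$ slot at its opposite endpoint, but only $m$ such slots exist across all vertices; contradiction.

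For $d=6$ the same style of counting yields $|E_1\cup E_6|=2m$, exactly saturating the $2m$ available position-$3$ and position-$4$ slots across the graph. This forces the finer constraint that every edge at position $3$ or $4$ at some endpoint sits at position $1$ or $6$ at its opposite endpoint. I would then switch to a two-edge logical: fix any vertex $v$, let $u_3,u_4$ be its position-$3$ and position-$4$ neighbors, and write $f(i)\in\{1,6\}$ for the position of $(v,u_i)$ at $u_i$. The product $O=\tilde{A}_{v,u_3}\tilde{A}_{v,u_4}$ acts at $v$ as a $Y$ at position $3$ and an $X$ at position $4$ with no $Z$'s in between (since those positions are consecutive), so its weight at $v$ is $2$; its weight at $u_i$ is $f(i)-1$ without $\tilde{B}_{u_i}$ and $6-f(i)$ with $\tilde{B}_{u_i}$. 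Because $f(i)\in\{1,6\}$ one of these is zero, so choosing $a_3,a_4\in\{0,1\}$ appropriately makes $O\cdot\tilde{B}_{u_3}^{a_3}\tilde{B}_{u_4}^{a_4}$ have total Pauli weight exactly $2$.

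The last step is to verify that this weight-$2$ operator is a genuine logical rather than an element of the stabilizer $\calS$. This is immediate because on $\calL$ it restricts to $A_{v,u_3}A_{v,u_4}=\pm i A_{u_3,u_4}$ up to factors of $B_{u_i}$, and $A_{u_3,u_4}=-ic_{2u_3}c_{2u_4}$ is a non-identity involution on the even-parity Fock space whenever $u_3\ne u_4$. I expect the main obstacle to be the saturation step in the $d=6$ counting --- extracting the finer constraint ``position $3$ or $4$ at one end $\Rightarrow$ position $1$ or $6$ at the other'' from the matching of the $2m$ position-$\{1,6\}$ endpoints against the $2m$ position-$\{3,4\}$ endpoints --- after which the two-edge construction applies uniformly at every vertex $v$.
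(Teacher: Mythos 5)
Your argument follows essentially the same route as the paper's Appendix~\ref{app:lackEC}: your position constraints $p+q\ge 4$, $|p-q|\le d-3$, $p+q\le 2d-2$ are exactly Proposition~\ref{prop:P1}, your slot-counting for $d=5$ and the saturation argument for $d=6$ reproduce Lemmas~\ref{lemma:d<=5} and~\ref{lemma:d=6}, and your two-edge operator $\tilde{A}_{v,u_3}\tilde{A}_{v,u_4}\tilde{B}_{u_3}^{a_3}\tilde{B}_{u_4}^{a_4}$ is the paper's Case~1. The one case you gloss over is $u_3=u_4$ (parallel edges, the paper's Case~2): there the $Z$-strings of the two edge operators overlap at the shared vertex and the weights $f(i)-1$ and $6-f(i)$ cannot be cancelled independently, but since $\{f(3),f(4)\}=\{1,6\}$ forces $a_3+a_4=1$ the product $\tilde{A}_{e_3}\tilde{A}_{e_4}\tilde{B}_{u}$ still collapses to weight $2$, so the gap is cosmetic.
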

In spite of this negative result, in Appendix~\ref{app:Hubbard} we show that in certain special cases
the error correction property can be achieved by introducing ancillary fermi modes.

{\it Error Correction in the Generalized Superfast Encoding. }
Here we describe a GSE that can correct all single-qubit errors.
Below we consider arbitrary interaction graphs $G$.
We allow $G$ to have multiple edges, that is,
some pairs of vertices can be connected by more than one edge.
Recall that  a graph is called $3$-connected if it remains connected after removal
of any subset of less than three vertices.  Our main result is the following.
\begin{theorem}
\label{thm:main}
Suppose the interaction graph $G$ is $3$-connected and each vertex $i$
has even degree $d(i)\ge 6$. Suppose any pair of vertices is connected by
at most two edges.
Then the Generalized Superfast Encoding corrects all single-qubit errors.
\end{theorem}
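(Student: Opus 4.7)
The plan is to verify that the stabilizer code underlying the GSE has distance at least $3$, i.e.\ every Pauli $P$ of weight $\le 2$ that commutes with every loop stabilizer $\tilde{A}(\zeta)$ must itself lie in $\calS$. First I would fix a convenient realization of the local Majoranas $\gamma_{i,p}$: place $d(i)/2$ qubits at vertex $i$ and take the $\gamma_{i,p}$ to be the Jordan--Wigner images on those qubits, so that $\tilde{B}_i$ acts as $Z_1 Z_2\cdots Z_{d(i)/2}$ up to a phase and has weight $d(i)/2\ge 3$. I would then fix the per-vertex edge labelling so that for every edge $(i,j)$ at least one of its two endpoint labels exceeds $2$; because each vertex has only two ``low'' slots (labels $1$ and $2$) and $d\ge 6$ gives $|E|\ge 3|V|$, a standard bipartite/Hall-type argument produces such a labelling. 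Under this choice every $\tilde{A}_{i,j}=\epsilon_{i,j}\gamma_{i,p}\gamma_{j,q}$ has weight $\ge 3$. Since the algebra generated by $\{\tilde{A}_{j,k},\tilde{B}_j\}$ modulo $\calS$ coincides with the full logical algebra, it suffices to show that no non-trivial logical representative has Pauli weight $\le 2$.

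The algebraic workhorse is the commutation rule for products of local Majoranas: with $Q=\prod_{r\in S}\gamma_{i,r}$, one has $Q\,\gamma_{i,p}\gamma_{i,q}=(-1)^{|S\cap\{p,q\}|}\gamma_{i,p}\gamma_{i,q}\,Q$, so $Q$ commutes with $\gamma_{i,p}\gamma_{i,q}$ iff $|S\cap\{p,q\}|\in\{0,2\}$. A loop $\zeta$ passing through vertex $i$ via the two edges labelled $p,q$ restricts at $i$ to $\pm\gamma_{i,p}\gamma_{i,q}$. The $3$-connectedness hypothesis enters precisely here: for any $j\ne i$ and any two edges at $i$ whose other endpoints lie in $V\setminus\{i,j\}$, the graph $G\setminus\{i,j\}$ is still connected, so those endpoints are joined by a path disjoint from $\{i,j\}$, producing a loop through $i$ that uses the two chosen edges and avoids $j$.

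The proof then splits according to the support of $P$. When $P$ is supported on a single vertex $i$, loops through $i$ realize $\pm\gamma_{i,p}\gamma_{i,q}$ at $i$ for every pair of indices, and the commutation rule forces $P|_i\in\{I,\pm\tilde{B}_i\}$; both are incompatible with weight $\le 2$, since $\tilde{B}_i$ has weight $\ge 3$. The substantive case is $P=P_i\otimes P_j$ with weight-$1$ Paulis at distinct vertices $i,j$. Because $d(i),d(j)\ge 6$ and at most two edges join $i$ and $j$, at least $d(i)-2\ge 4$ edges at $i$ have other endpoints different from $j$; applying the single-vertex argument with loops through $i$ that avoid $j$ pins $P_i|_i$ to a product of Majoranas indexed either by (a subset of) the edges $(i,j)$ or by its complement, and symmetrically for $P_j$. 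Only a short list of candidates remains, each corresponding modulo $\calS$ to $I$, an $\tilde{A}_{i,j}$, or a product involving $\tilde{B}_j$ and a second parallel edge; weight-$\ge 3$ bounds on $\tilde{A}_{i,j}$ and $\tilde{B}_j$, together with commutation against a loop that traverses both $i$ and $j$, rule out every non-trivial option.

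The hard part will be this final step of the multi-vertex case. In the Jordan--Wigner realization exactly $\gamma_1$ and $\gamma_2$ are single-qubit Paulis, so one must verify by careful enumeration that no pairing of these at $i$ and $j$ yields a weight-$2$ element of the normalizer outside $\calS$. The hypothesis that at most two parallel edges join any pair of vertices is critical here: a triple edge would supply enough extra weight-$1$ Majorana labels at both endpoints to realize an undetected weight-$2$ logical, whereas two parallel edges leave a length-$2$ cycle whose loop operator detects the remaining dangerous candidate.
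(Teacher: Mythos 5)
Your overall architecture mirrors the paper's: reduce to showing code distance $\ge 3$, use $3$-connectedness to realize $\pm\gamma_{i,p}\gamma_{i,q}$ at a vertex $i$ via a loop avoiding any designated second vertex $j$, conclude that the restriction of a low-weight logical $P$ to each supporting vertex lies in a small subgroup, and invoke the two-parallel-edge stabilizer to kill the last candidate. The gap is in your choice of local Majorana modes. With the Jordan--Wigner realization on the $d(i)/2$ qubits at vertex $i$, the operators $\gamma_{i,1}=X_1$ and $\gamma_{i,2}=Y_1$ have weight $1$, and so do $\tilde{B}_i\gamma_{i,d(i)-1}$ and $\tilde{B}_i\gamma_{i,d(i)}$ (e.g.\ for $d(i)=6$, $\tilde{B}_i\gamma_{i,6}\sim (ZZZ)(ZZY)\sim IIX$). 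Since $\tilde{A}_{i,j}$, $\tilde{A}_{i,j}\tilde{B}_i$, $\tilde{A}_{i,j}\tilde{B}_j$ and $\tilde{A}_{i,j}\tilde{B}_i\tilde{B}_j$ are all logical operators, you need $\min(|\gamma_{i,r}|,|\tilde{B}_i\gamma_{i,r}|)+\min(|\gamma_{j,q}|,|\tilde{B}_j\gamma_{j,q}|)\ge 3$ for every edge with labels $r$ at $i$ and $q$ at $j$; your labelling constraint only excludes the case $r,q\in\{1,2\}$ and ignores the cancellation in the $\tilde{B}$-multiplied operators (you explicitly appeal to ``weight-$\ge 3$ bounds on $\tilde{A}_{i,j}$ and $\tilde{B}_j$'' separately, which does not bound their product). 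Worse, no relabelling can rescue this for $d=6$: each vertex has only two ``good'' labels ($\{3,4\}$) out of six, so at most $2|V|$ edge-endpoints are good while there are $3|V|$ edges, forcing at least $|V|$ edges to be bad at both endpoints and hence at least $|V|$ weight-$2$ logical operators. This is essentially the paper's own counting argument (Appendix~\ref{app:lackEC}) proving that the original Superfast Encoding, whose weight profile coincides with your JW instantiation, fails for $d\le 6$.

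The paper avoids this by constructing non-Jordan--Wigner local Majorana modes, Eq.~(\ref{degree6Majorana}) and Appendix~\ref{app:GSEhighdegree}, arranged cyclically so that \emph{every} $\gamma_{i,p}$ has weight $\ge 2$ and every product $\tilde{B}_i\gamma_{i,p}$, $\tilde{B}_i\gamma_{i,p}\gamma_{i,q}$ has weight $\ge 2$ (condition Eq.~(\ref{d3condition})). With that choice the edge labelling becomes irrelevant and your case analysis goes through essentially verbatim. To repair your proof, replace the JW modes and the Hall-type labelling step with such a construction and verify Eq.~(\ref{d3condition}) directly.
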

Note that the GSE  lacks the error correction property
if $d(i)<6$ for some vertex $i$. Indeed, in this case $\tilde{B}_i$
is a logical operator with weight at most $2$ (since the vertex $i$ contains
at most two qubits).
One can always satisfy conditions of the theorem by adding dummy edges
$(i,j)$ to the interaction graph such that the corresponding terms $H_{i,j}$
in the target Hamiltonian are zero. This would  slightly increase the number
of qubits required for the encoding, see Eq.~(\ref{n}).

Let us  prove the theorem.
Suppose one can choose the local Majorana modes
$\gamma_{i,p}$ such that the following conditions
hold for each vertex $i\in V$ and for
each $1\le p<q\le d(i)$. Here $|O|$ denotes the weight of a Pauli operator $O$.
\be
\label{d3condition}
 |\tilde{B}_i|  \ge 3, \quad 
|\gamma_{i,p}|\ge 2,  \quad |\tilde{B}_i \gamma_{i,p}|\ge 2, \quad
|\tilde{B}_i \gamma_{i,p}\gamma_{i,q}|\ge 2.
\ee
An explicit choice of $\gamma_{i,p}$ satisfying Eq.~(\ref{d3condition}) is shown below.
Assume that $P$  is a  logical operator with weight less than $3$
and show  that  this assumption leads to a contradiction.
Let $V(P)\subseteq V$ be the set of vertices
$i\in V$ such that $P$ acts non-trivially on some qubit of $i$.
By assumption, $|V(P)|~\le~2$.

Suppose first that $V(P)=\{i\}$ is a single vertex
or $V(P)=\{i,j\}$ for some pair of vertices $i\ne j$
such that $(i,j)\notin E$.
Since $P$ commutes with the stabilizers $\tilde{A}(\zeta)$,
it must commute with $\gamma_{i,p} \gamma_{i,q}$ whenever there exists a closed loop $\zeta$
such that $p,q$ are the labels of edges incident to $i$ that belong to $\zeta$.
In the case $V(P)=\{i,j\}$ we additionally require that $\zeta$ does not contain 
the vertex $j$. We claim that such loop $\zeta$ exists 
for all  $1\le p< q\le d(i)$. Indeed, 
let $s=N(i,p)$ and $t=N(i,q)$ be the $p$-th and the $q$-th neighbors of $i$. 
By assumption, $j\notin \{s,t\}$. 
Let $G'$ be the graph obtained from $G$ by removing the vertices
$i,j$ and all edges incident to these vertices.  By assumption,
$G'$ is connected.
Let $\zeta'$ be any path in the graph $G'$ connecting $s$ and $t$.
Complete $\zeta'$ to a  loop $\zeta$ in the graph $G$ by adding the 
edges $(i,s)$ and $(i,t)$. 
By construction,  $\tilde{A}(\zeta)$ acts on the vertex $i$ 
as $\gamma_{i,p} \gamma_{i,q}$ and acts trivially on the vertex $j$.
It follows that 
$P$ commutes with $\gamma_{i,p} \gamma_{i,q}$ for all  $1\le p<q\le d(i)$.
This is possible only if $P \sim \tilde{B}_i$.
This  contradicts to the assumption that $P$ acts on at most two qubits, 
per Eq.~(\ref{d3condition}).

Suppose now that $V(P)=\{i,j\}$ for some pair of vertices  $i\ne j$ such that $(i,j)\in E$. 
We have to consider two cases.

{\em Case~1:} There is a single edge connecting $i$ and $j$.
Suppose $j$ is the $r$-th neighbor, $j=N(i,r)$.
Choose any integers $1\le p<q\le d(i)$ such that 
$r\notin \{p,q\}$. The same argument as above shows that the restriction of $P$ onto the vertex $i$
must commute with $\gamma_{i,p} \gamma_{i,q}$. 
This is possible only if $P$ acts on $i$ as $\gamma_{i,r}$ or $\tilde{B}_i \gamma_{i,r}$.
According to Eq.~(\ref{d3condition}), one can check that 
$\gamma_{i,r}$ and $\tilde{B}_i \gamma_{i,r}$ have weight at least $2$ for all $r$.
Likewise, suppose $i$ is the $q$-th neighbor of $j$, that is, $i=N(j,q)$.
The same argument shows that $P$ acts on $j$ as $\gamma_{j,q}$ or $\tilde{B}_j \gamma_{j,q}$.
Again, these operators have weight at least $2$. Thus $P=P_i P_j$, where 
$P_i$ and $P_j$ have weight at least $2$. Therefore $P$ has weight at least $4$
which is a contradiction. 

{\em Case~2:} There are two edges connecting $i$ and $j$.
Suppose $j$ is the $r$-th and $s$-th neighbor of $i$ for some $r\ne s$.
The same argument as above shows that 
 the restriction of $P$ onto the vertex $i$
must commute with $\gamma_{i,p} \gamma_{i,q}$ for any $1\le p<q\le d(i)$
such that $r,s\notin \{p,q\}$. This is possible only if 
the restriction of $P$ onto the vertex $i$ belongs to the group
generated by  $\gamma_{i,r}$, $\gamma_{i,s}$, and $\tilde{B}_i$.
Likewise, the restriction of $P$ onto the vertex $j$ belongs to the group
generated by $\gamma_{j,t}$, $\gamma_{j,u}$, and $\tilde{B}_j$ for some 
$1\le t<u\le d(j)$.
Using Eq.~(\ref{d3condition}) one can check that 
$P=P_iP_j$ has weight at most $2$ only if 
$P\sim \gamma_{i,r} \gamma_{i,s} \gamma_{j,t} \gamma_{j,u}$. However, such $P$
is proportional to the stabilizer $\tilde{A}(\zeta)$ where
$\zeta$ here is a loop formed by the two edges connecting $i,j$.
This is impossible since $P$ is a logical operator.
To summarize, Theorem~\ref{thm:main} follows from Eq.~(\ref{d3condition}).

Let us show how to satisfy Eq.~(\ref{d3condition}) in the special case of  degree-$6$ graphs.
In this case each vertex $i$ contains $3$ qubits. We shall denote Pauli operators acting on
the qubits located at a vertex $i$ as $(QRT)_i$, where
$Q,R,T\in \{I,X,Y,Z\}$. Choose
\begin{align}
\gamma_{i,1}=(ZXI)_i, \qquad  & \gamma_{i,2}=(ZYI)_i\nonumber \\
\gamma_{i,3}=(IZX)_i , \qquad  & \gamma_{i,4}=(IZY)_i\nonumber \\
\gamma_{i,5}=(XIZ)_i, \qquad & \gamma_{i,6}=(YIZ)_i. \label{degree6Majorana}
\end{align}
Note that $\tilde{B}_j=(ZZZ)_j$.
One can easily check that these operators obey the commutation rules Eq.~(\ref{localMajorana}) and
weight constraints Eq.~(\ref{d3condition}), therefore proving Theorem~\ref{thm:main}
for degree-$6$ graphs. A generalization of Eq.~(\ref{degree6Majorana})
to arbitrary even vertex degree $d(i)\ge 6$ can be found in Appendix~\ref{app:GSEhighdegree}.

\begin{figure}[t!]
	\includegraphics[width=6.5cm]{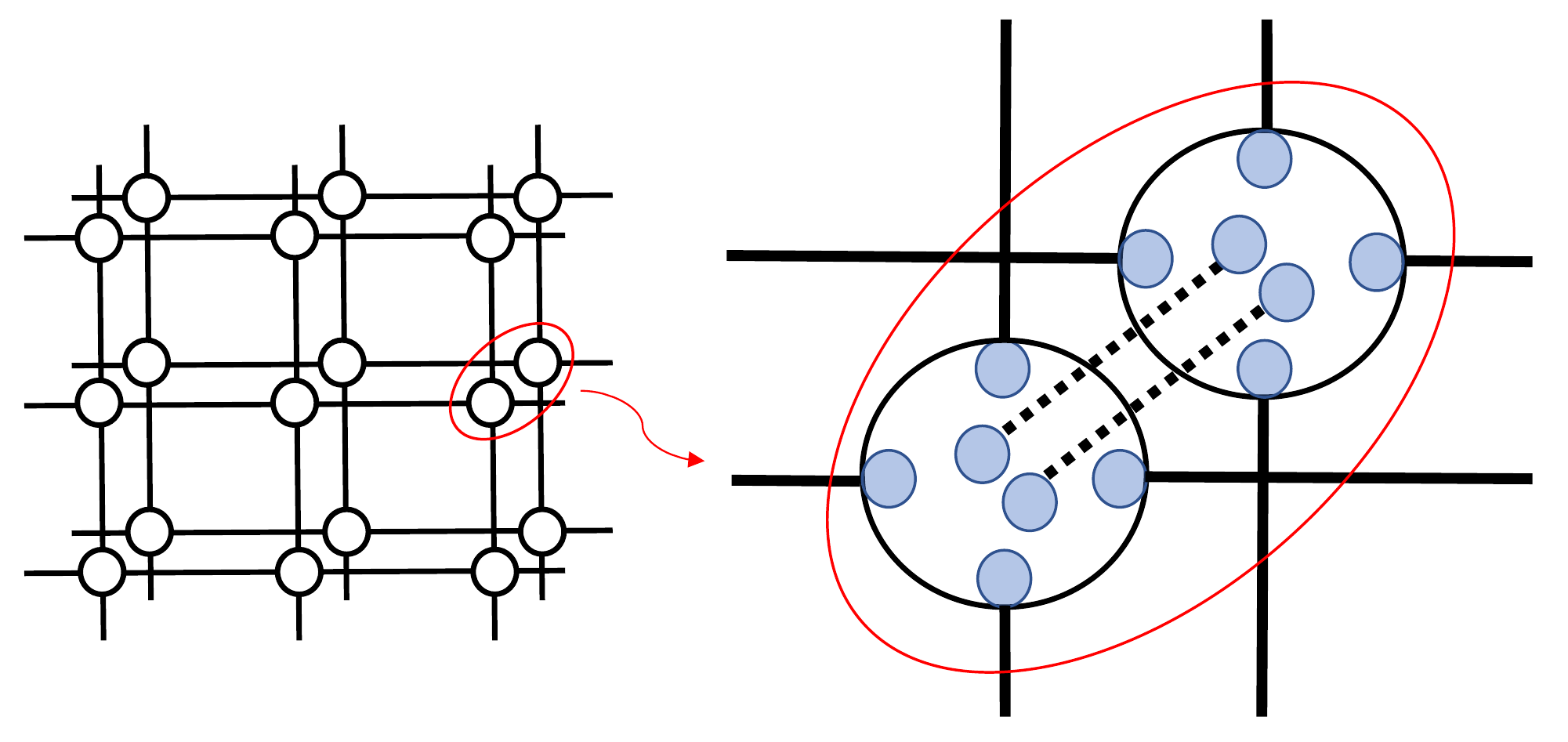}
	\caption{Qubit encoding of the 2D Hubbard model using the GSE.
	{\em Left:} two lattices representing spin-up and spin-down fermi modes.
	{\em Right:} a local view of the interaction graph $G$. 
	Each vertex contains $6$ local Majorana modes ($3$ qubits).
	Dotted lines represent dummy edges introduced to satisfy
	conditions of Theorem~\ref{thm:main}.}
    \label{fig:Hub_GSE}
\end{figure}

{\it Generalized Superfast Encoding for the Hubbard model. }
Let us now show how to simulate the 2D Hubbard model using the GSE. The model Hamiltonian is given as 
\begin{align}
    H=&-t\sum_{(i,j)} \; \sum_{\sigma \in \{\uparrow, \downarrow \} }(a^{\dagger}_{i\sigma}a_{j\sigma}+a^{\dagger}_{j\sigma}a_{i\sigma}) \nonumber \\
    &+\epsilon\sum_i \sum_{\sigma \in \{\uparrow, \downarrow \} }a^{\dagger}_{i\sigma}a_{i\sigma}
    +U\sum_{i}a^{\dagger}_{i\downarrow}a_{i\downarrow}a^{\dagger}_{i\uparrow}a_{i\uparrow}, \label{eq:hub_ham}
\end{align}
where $i,j$ are sites of a square lattice,
$(i,j)$ stand for nearest-neighbor sites,
$\sigma$ is a spin index,
and $t,\epsilon,U$ are some coefficients.
The interaction graph $G$ shown on Fig.~\ref{fig:Hub_GSE}
consists of two copies of the lattice representing spin-up and spin-down fermi modes.
To satisfy  conditions of Theorem~\ref{thm:main}
we have introduced two dummy edges (dotted lines) connecting 
each
respective pair of spin-up and spin-down vertices.  
The resulting graph $G$ 
is $3$-connected and 
has vertex degree $6$. Therefore, by Theorem~\ref{thm:main},
the corresponding GSE encoding corrects any single-qubit error. 
The encoding requires $6s$ qubits, where $s$ is the number of sites in the  lattice
(the number of fermi modes is $m=2s$).
Using Eq.~(\ref{degree6Majorana}) one can check that 
the operators $\tilde{B}_j$, $\tilde{A}_{j,k}$, and $\tilde{A}_{j,k}\tilde{B}_j$ have 
Pauli weight $3$, $4$, and $4$ respectively. 
The two-mode interaction operators $\tilde{B}_j \tilde{B}_k$ have weight $6$.
 We conclude that the simulator
Hamiltonian $\tilde{H}$ for the 2D Hubbard model is a sum of Pauli terms
with weight at most $6$.

{\it Conclusions.}
We  described a GSE that maps a target fermi Hamiltonian
on a graph of degree $d$  into  a qubit simulator Hamiltonian 
with Pauli terms of weight at most $d$
and  corrects all single-qubit errors. 
If one does not insist on the error correction property, 
the Pauli weight of the simulator Hamiltonian
can be reduced to $O(\log{d})$.  
Future research could address  the question of whether GSEs are capable 
of correcting more than one error and whether it is possible
to combine $O(\log{d})$ Pauli weight and the error correction property.

{\it Acknowledgements. }
The authors thank Jay Gambetta, Kristan Temme and  Theodore Yoder for helpful discussions and comments.
SB and AM acknowledge support from the IBM Research
Frontiers Institute. KS and JDW are funded by NSF awards DMR-1747426, 1820747.



\appendix
\newpage
\section{Properties of the loop operators}
\label{app:loop}

Let $\calS$ be an abelian group generated by 
all loop operators $\tilde{A}(\zeta)$
constructed using the Superfast Encoding or its generalized version.
In this section we prove that $-I\notin \calS$ and thus $\calS$ can be viewed
as a stabilizer group of a quantum code.
We show that this code has $m-1$ logical qubits.
To avoid clutter, in this section we shall omit the tilde symbol
in the notations for loop and edge operators.
In other words, in the rest of this section $A(\zeta)$ and $A_{j,k}$ refer
to qubit operators.

Recall that we consider a connected interaction graph $G=(V,E)$.
Define a  {\em path}
of length $s$ as a function 
\[
\zeta \, : \, \{0,1,\ldots,s\} \to V
\]
such that  vertices
$\zeta(j-1)$ and $\zeta(j)$  are nearest neighbors in the graph $G$
for  all $j=1,\ldots,s$.
A path may intersect itself.
We shall use a shorthand notation $|\zeta|=s$ for the length of $\zeta$.
For any path $\zeta$  let $\bar{\zeta}$ be the inverse path such that 
$|\bar{\zeta}|=|\zeta|=s$ and 
$\bar{\zeta}(j)=\zeta(s-j)$ for $0\le j\le s$.  A path is called a {\em loop} if 
$\zeta(s)=\zeta(0)$. 
Finally, suppose $\zeta_i$ are
paths of length $s_i$, where $i=1,2$.   We say that $\zeta_1$ and $\zeta_2$ are 
 {\em composable} if
$\zeta_1(s_1)=\zeta_2(0)$.
Define a composition $\zeta=\zeta_1\circ \zeta_2$ as a path of length $s_1+s_2$
that $\zeta(j)=\zeta_1(j)$ for $0\le j\le s_1$ and $\zeta(j)=\zeta_2(j-s_1)$
for $s_1\le j\le s_1+s_2$. 
For any path $\zeta$ define a path operator
\[
A(\zeta)=i^{s} A_{\zeta(0),\zeta(1)} A_{\zeta(1),\zeta(2)} \cdots A_{\zeta(s-1),\zeta(s)},
\qquad s\equiv |\zeta|.
\]
\begin{lemma}
\label{app:LoopLemma}
Path operators have the following properties:\\
\noindent
(1) For any path $\zeta$ one has $A(\bar{\zeta})A(\zeta)=I$. \\
\noindent
(2)  $A(\zeta_1\circ \zeta_2)=A(\zeta_1)A(\zeta_2)$
for any composable paths.\\
\noindent
(3) If $\zeta$ is a loop then $A(\zeta)$ commutes with all path operators. \\
\noindent
(4)  If $\zeta$ is a loop then $A(\zeta)^\dag =A(\zeta)$.
\end{lemma}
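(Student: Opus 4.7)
The plan is to derive all four items from four algebraic facts: Hermiticity $A_{ij}^\dagger = A_{ij}$, involutivity $A_{ij}^2 = I$, antisymmetry $A_{ji}=-A_{ij}$, and the commutation rule of Eq.~(\ref{CR2}), each of which both encodings satisfy by construction.

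Parts (1) and (2) are direct manipulations. For (1), I expand $A(\bar\zeta)$ using $\bar\zeta(k) = \zeta(s-k)$ and apply $A_{ji} = -A_{ij}$ to each of the $s$ factors; this gives $(-i)^s$ times the factor-product of $A(\zeta)$ in reverse order. Multiplication by $A(\zeta)$ cancels the phase since $(-i)^s i^s = 1$, and the central factors annihilate via $A_{\zeta(k),\zeta(k+1)}^2 = I$ so the product telescopes inward to $I$. Part (2) is an immediate matching of definitions, with $i^{s_1}i^{s_2} = i^{s_1+s_2}$ supplying the overall phase.

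Part (3) reduces to showing that $A(\zeta)$ commutes with each single edge operator $A_{ij}$; by (1) and (2) this extends to arbitrary path operators. The commutation sign is $(-1)^N$ where
\[
N = \sum_{k=0}^{s-1}(\delta_{\zeta(k),i}+\delta_{\zeta(k+1),i}+\delta_{\zeta(k),j}+\delta_{\zeta(k+1),j}).
\]
The loop identification $\zeta(s)=\zeta(0)$ makes $\sum_k\delta_{\zeta(k+1),v}=\sum_k\delta_{\zeta(k),v}=N(v)$, where $N(v)$ is the number of visits of $\zeta$ to $v$; hence $N=2(N(i)+N(j))$ is even.

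Part (4) is the most delicate. Taking the adjoint of $A(\zeta)=i^s\prod_k A_{\zeta(k),\zeta(k+1)}$, invoking Hermiticity to reverse the product and $A_{ji}=-A_{ij}$ to absorb the resulting $(-1)^s$ into the phase, yields $A(\zeta)^\dagger = A(\bar\zeta)$ on arbitrary paths. Combined with (1) this gives $A(\zeta)^\dagger = A(\zeta)^{-1}$, so $A(\zeta)$ is unitary; promoting unitarity to self-adjointness for loops is then equivalent to $A(\zeta)^2 = I$. I would prove the latter by telescoping: write $A(\zeta)^2 = (-1)^s \prod_k A_k \prod_k A_k$ with $A_k=A_{\zeta(k),\zeta(k+1)}$, and for each $k$ successively commute the second-copy factor $A_k$ leftward past the remaining first-copy factors until it annihilates its partner via $A_k^2=I$. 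The cumulative sign is $(-1)^{T/2}$, where
\[
T = \sum_{k\ne j}(\delta_{\zeta(k),\zeta(j)}+\delta_{\zeta(k),\zeta(j+1)}+\delta_{\zeta(k+1),\zeta(j)}+\delta_{\zeta(k+1),\zeta(j+1)}).
\]
Splitting $T$ into its four constituent sums and using the cyclic shift invariance of visit counts gives $T = 4\sum_v N(v)^2 - 2s$, so $T/2\equiv s\pmod{2}$, and $(-1)^{T/2}$ cancels the prefactor $(-1)^s$ to give $A(\zeta)^2 = I$. The main obstacle is precisely this sign bookkeeping; the crucial observation is that exactly the two diagonal $k=j$ sums produce the $-2s$ correction whose parity ensures cancellation, and walks without consecutive repetitions ($\zeta(k)\ne\zeta(k+1)$) render the other two diagonal contributions zero.
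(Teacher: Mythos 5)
Your proof is correct. Parts (1)--(3) follow essentially the same lines as the paper's: (1) and (2) are the same phase-and-telescoping bookkeeping, and your parity count $N=2(N(i)+N(j))$ for (3) is just an explicit version of the paper's observation that a loop contains an even number of edges incident to any vertex subset. Part (4) is where you genuinely diverge. The paper proves Hermiticity by showing directly that reversing the order of the $s$ factors of a loop operator produces the sign $(-1)^s$; it gets this by peeling off one edge $e$ at a time, writing $\zeta=\zeta'\circ e\circ\zeta''$ and using the fact that $A(e)$ commutes with $A(\zeta')$ when $e$ is the terminal edge of the loop and anti-commutes with both $A(\zeta')$ and $A(\zeta'')$ when $e$ is interior. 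You instead prove $A(\zeta)^\dagger=A(\bar{\zeta})$ for arbitrary paths, reduce Hermiticity of loop operators to $A(\zeta)^2=I$ via part (1), and establish the latter by the global coincidence count $T=4\sum_v N(v)^2-2s$, whence $(-1)^{T/2}=(-1)^s$ cancels the prefactor. The two arguments compute the same sign --- reversing the product of $s$ factors costs exactly $(-1)^{T/2}$ --- but yours makes the bookkeeping fully explicit and mechanically checkable (including the role of the diagonal $2s$ term and the no-self-loop condition $\zeta(k)\ne\zeta(k+1)$), whereas the paper's peeling argument is slicker but stated tersely. Both versions use only the vertex-overlap commutation rule, so both remain valid for the Generalized Superfast Encoding on multigraphs, where the edge subscripts in $A_{\zeta(k),\zeta(k+1)}$ are ambiguous but the vertex labels entering the sign are not.
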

\begin{proof}
We shall use the commutation rules
\begin{align}
A_{j,k}^\dag = A_{j,k}, \quad 
A_{j,k}^2=I, \quad A_{k,j}=-A_{j,k}, \label{appCR1} \\
 A_{j,k} A_{j',k'} = A_{j',k'}A_{j,k} (-1)^{|\{j,k\} \cap \{ j',k'\}|} \label{appCR2}
\end{align}
Let $\zeta$ be a path of length $s$.
By definition,
\begin{align*}
A(\bar{\zeta})A(\zeta)
=&(-1)^s A_{\zeta(s),\zeta(s-1)} \cdots A_{\zeta(1),\zeta(0)} A_{\zeta(0),\zeta(1)}\\ & \cdots A_{\zeta(s-1),\zeta(s)}.
\end{align*}
From Eq.~(\ref{appCR1}) one gets $A_{\zeta(j),\zeta(j-1)} A_{\zeta(j-1),\zeta(j)} = -I$ for all $j$. Thus
$A(\bar{\zeta})A(\zeta)=(-1)^s \cdot (-1)^s I = I$.
Property~(2) follows directly from the definitions.
Suppose $\zeta$ is a loop. Consider an arbitrary edge $(j,k)\in E$.
To prove Property~(3) it suffices to check that $A(\zeta)$ commutes with $A_{j,k}$.
From Eq.~(\ref{appCR2}) one infers that 
$A_{j',k'}$ anti-commutes with $A_{j,k}$ iff $(j',k')$ is an edge
incident to the subset $\{j,k\}$. However, since $\zeta$ is a loop, it contains
even number of edges incident to any subset of vertices. Thus $A(\zeta)$ commutes
with $A_{j,k}$ proving Property~(3).
To prove Property~(4) suppose that $\zeta=\zeta'\circ e$ for some path
$\zeta'$ and some edge $e$ (considered as a path of length one).
The same argument as above shows that $A(e)$ commutes with
$A(\zeta')$. Likewise, if $\zeta=\zeta'\circ e \circ \zeta''$ for some non-empty
paths $\zeta',\zeta''$ and  some edge $e$ then $A(e)$ anti-commutes with
$A(\zeta')$ and $A(\zeta'')$. Repeatedly applying these commutation rules gives
\begin{align*}
A_{\zeta(s-1),\zeta(s)} \cdots A_{\zeta(1),\zeta(2)}A_{\zeta(0),\zeta(1)}=\\
(-1)^s 
A_{\zeta(0),\zeta(1)} A_{\zeta(1),\zeta(2)} \cdots A_{\zeta(s-1),\zeta(s)}
\end{align*}
and proves Property~(4).
\end{proof}
Let $T\subseteq E$ be some fixed spanning tree of $G$
with a fixed root vertex.
For any vertex $j\in V$ let $\omega^j$ be the unique path in $T$ 
 that starts at  the root  and ends at $j$.  
If $\zeta$ is a loop of length $s$ then
\begin{align}
A(\zeta)=&i^{s} A(\bar{\omega}^{\zeta(0)})A(\omega^{\zeta(0)})
A_{\zeta(0),\zeta(1)}\nonumber \\
& A(\bar{\omega}^{\zeta(1)})A(\omega^{\zeta(1)})
\cdots
A(\bar{\omega}^{\zeta(s-1)})A(\omega^{\zeta(s-1)}) \nonumber \\
& A_{\zeta(s-1),\zeta(s)}A(\bar{\omega}^{\zeta(0)})
A(\omega^{\zeta(0)}). \label{app:AAA}
\end{align}
Here we used Property~(1) of Lemma~\ref{app:LoopLemma}
and noted  that $\zeta(s)=\zeta(0)$.
Note that 
\[
iA_{\zeta(j-1),\zeta(j)} = A(e^j), \qquad e^j\equiv [\zeta(j-1),\zeta(j)].
\]
Here $e^j$ is a path of length one that  starts at $\zeta(j-1)$ and ends at $\zeta(j)$.
Regrouping the terms in Eq.~(\ref{app:AAA}) using Property~(2) gives
\[
A(\zeta)=A(\bar{\omega}^{\zeta(0)})
A(\zeta^1) A(\zeta^2)\cdots A(\zeta^s) A(\omega^{\zeta(0)}),
\]
where
\[
\zeta^j=\omega^{\zeta(j-1)} \circ e^j  \circ \bar{\omega}^{\zeta(j)}.
\]
Note that $\zeta^j$ is  a loop that starts and ends at the root.
Finally, Properties~(1,3) give
\begin{align*}
A(\zeta)= &A(\bar{\omega}^{\zeta(0)})A(\omega^{\zeta(0)})A(\zeta^1) A(\zeta^2)\cdots A(\zeta^s) =\\
&
A(\zeta^1) A(\zeta^2)\cdots A(\zeta^s)
\end{align*}
and all operators $A(\zeta^p)$ pairwise commute.
If $e^j$ belongs to the spanning tree $T$ then $\zeta^j$ is a composition of a path 
from the root to one of the vertices $\zeta(j-1)$, $\zeta(j)$ and the inverse path.
Properties~(1,2) imply that $A(\zeta^j)=I$ whenever $e^j\in T$.
We conclude that any loop operator $A(\zeta)$ belongs to the group generated
by the loop operators $A(\zeta^j)$ with $e^j\notin T$. In other words,
\be
\label{appGens}
\calS=\la  A(\zeta^j) \, : \, e^j\notin T\ra.
\ee
We claim that the set of generators of $\calS$ defined in Eq.~(\ref{appGens}) is independent. 
Consider first the Superfast Encoding. Then $A(\zeta^j)$ is the only generator
that anti-commutes with the Pauli $Z$ acting on the edge $e^j$ which implies the
independence property. Consider now the Generalized Superfast Encoding
and some generator  $A(\zeta^j)$.
Let $p$  be the integer such that $\zeta(j)$ is the $p$-th neighbor of $\zeta(j-1)$.
Then $A(\zeta^j)$ is the only generator
that anti-commutes with the local Majorana mode $\gamma_{\zeta(j-1),p}$ which implies the
independence property.
Property~(4) implies that each generator $A(\zeta^j)$ is hermitian. 
Thus $\calS$ is an abelian group that has a set of independent hermitian
Pauli generators. The standard stabilizer formalism then implies that $-I\notin \calS$.
Note that the number of generators in Eq.~(\ref{appGens}) is
$s=|E|-|T|=|E|-|V|+1$. Thus the stabilizer code $\calS$ has
$|E|-s=|V|-1=m-1$ logical qubits.

\section{Local Majorana modes for vertex degree $d\ge 6$}
\label{app:GSEhighdegree}

In this section we show how to choose the
local Majorana modes $\gamma_{i,p}$
that satisfy the error correction condition
Eq.~(\ref{d3condition}) for any even vertex degree
$d(i)\ge 6$.
For example, if $d(i)=8$ or $d(i)=10$ one can
satisfy Eq.~(\ref{d3condition}) by choosing
\begin{align*}
\gamma_{i,1}=ZZXI, \qquad  & \gamma_{i,2}=ZZYI  \\
\gamma_{i,3}=IZZX,\qquad  & \gamma_{i,4}=IZZY \\
\gamma_{i,5}=XIIZ, \qquad & \gamma_{i,6}=YIIZ \\
\gamma_{i,7}=ZXII,\qquad  & \gamma_{i,8}=ZYII \\
\end{align*}
and
\begin{align*}
    \gamma_{i,1}=ZZXII,\qquad & \gamma_{i,2}=ZZYII\\
    \gamma_{i,3}=IZZXI,\qquad & \gamma_{i,4}=IZZYI\\
    \gamma_{i,5}=IIZZX,\qquad & \gamma_{i,6}=IIZZY\\
    \gamma_{i,7}=XIIZZ,\qquad & \gamma_{i,8}=YIIZZ\\
    \gamma_{i,9}=ZXIIZ,\qquad & \gamma_{i,10}=ZYIIZ
\end{align*}
Suppose now that $d(i)/2=2k+1$ for some integer $k$. Set
\begin{align*}
\gamma_{i,1}=\underbrace{Z\cdots Z}_{k} X \underbrace{I\cdots I}_{k}, \qquad &
\gamma_{i,2}=\underbrace{Z\cdots Z}_{k} Y \underbrace{I\cdots I}_{k}\\
\end{align*}
and choose the remaining modes $\gamma_{i,p}$ as $2k$ cyclic shifts of $\gamma_{i,1}$ and $\gamma_{i,2}$.
If $d(i)/2=2k$ for some integer $k$ then set
\begin{align*}
\gamma_{i,1}=\underbrace{Z\cdots Z}_{k} X \underbrace{I\cdots I}_{k-1}, \quad &
\gamma_{i,2}=\underbrace{Z\cdots Z}_{k} Y \underbrace{I\cdots I}_{k-1}\\
\gamma_{i,2k+1}=X\underbrace{I\cdots I}_{k} X \underbrace{Z\cdots Z}_{k-1}, \quad &
\gamma_{i,2k+2}=Y\underbrace{I\cdots I}_{k} Y \underbrace{Z\cdots Z}_{k-1}\\
\end{align*}
and choose  the remaining modes $\gamma_{i,p}$   as $k-1$ cyclic shifts of $\gamma_{i,1}$, $\gamma_{i,2}$,
$\gamma_{i,2k+1}$, $\gamma_{i,2k+2}$.
One can easily check that such local Majorana modes have the desired property Eq.~(\ref{d3condition}).

\section{Lack of error correction in the Superfast Encoding}
\label{app:lackEC}

In this section we prove Lemma~\ref{lemma:lackEC}.
Suppose $G=(V,E)$ is a $d$-regular graph, that is, every vertex has exactly $d$ incident edges.
We assume that 
 edges incident to each vertex $i$   are labeled by integers  $p\in [d]\equiv \{1,2,\ldots,d\}$.
This can be described by a map
\[
\omega \, : \, V \times [d]\to E
\]
such that $\omega(i,1),\ldots,\omega(i,d)$ are the edges
incident to a vertex $i\in V$.
For any $p,q\in [d]$ let
$E_{p,q}\subseteq E$ be the subset of edges labeled by $p,q$, that is,  
\begin{align}
E_{p,q}=&
\{ e=(i,j)\in E\, : \,  & e=\omega(i,p)=\omega(j,q) \quad \mbox{or} \nonumber \\
&&  e=\omega(i,q)=\omega(j,p)\}.\label{Epq}
\end{align}
By definition, $E_{p,q}=E_{q,p}$.

\begin{prop}
\label{prop:P1}
Suppose the Superfast Encoding corrects all single-qubit errors.
Then $E_{1,p}=E_{d,p}=\emptyset$ for $p\in \{1,2\}$ and $p\in \{d-1,d\}$. 
\end{prop}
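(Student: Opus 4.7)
The plan is to exhibit, for an arbitrary edge $(i,j)\in E$, four logical operators of the Superfast code whose weights are explicit functions of the positions $p_i,p_j\in [d]$ at which $(i,j)$ occurs in the orderings $<_i$ and $<_j$ (so $\omega(i,p_i)=\omega(j,p_j)=(i,j)$). Requiring each of these weights to be at least $3$ will yield four inequalities on $p_i,p_j$, and two of them are violated exactly when the edge lies in one of the forbidden classes $E_{1,p}$ with $p\in\{1,2\}$ or $E_{d,p}$ with $p\in\{d-1,d\}$.

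First I would read off the support of $\tilde{A}_{i,j}$ from Eq.~(\ref{tilde_A_ij}): an $X$ on the qubit $(i,j)$, a $Z$ on each of the $p_i-1$ other edges at $i$ with $<_i$-label smaller than $p_i$, and a $Z$ on each of the $p_j-1$ other edges at $j$ with $<_j$-label smaller than $p_j$. These three groups of qubits are pairwise disjoint, so $|\tilde{A}_{i,j}|=p_i+p_j-1$. Next I would compute $|\tilde{A}_{i,j}\tilde{B}_j|$: because $\tilde{B}_j=\prod_{k:(j,k)\in E}Z_{jk}$ touches only the edges at $j$, the product promotes $X_{(i,j)}$ to $Y_{(i,j)}$, cancels the $p_j-1$ existing $Z$'s at $j$, and introduces $d-p_j$ new $Z$'s on the edges at $j$ whose $<_j$-label exceeds $p_j$, leaving the qubits at $i$ unchanged. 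Hence $|\tilde{A}_{i,j}\tilde{B}_j|=p_i+d-p_j$, and by the symmetric role of $i,j$, $|\tilde{A}_{i,j}\tilde{B}_i|=p_j+d-p_i$. An analogous calculation for $\tilde{A}_{i,j}\tilde{B}_i\tilde{B}_j$ (the $X$ on $(i,j)$ is preserved and $Z$'s appear precisely on the edges at $i$ and $j$ whose label strictly exceeds $p_i$ and $p_j$ respectively) gives $|\tilde{A}_{i,j}\tilde{B}_i\tilde{B}_j|=2d+1-p_i-p_j$.

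Each of the four Pauli operators above is the qubit counterpart of an element of the parity-preserving algebra $\calF$ (namely $A_{i,j},\ A_{i,j}B_i,\ A_{i,j}B_j,\ A_{i,j}B_iB_j$), so each commutes with every loop stabilizer $\tilde{A}(\zeta)$; none equals a stabilizer, since the corresponding fermionic operator is a non-trivial element of $\calF$ on the even-parity subspace. Hence all four are genuine logical operators of $\calS$, and the single-qubit error-correction hypothesis forces each of their weights to be at least $3$, which reads
\[
p_i+p_j\ge 4,\quad 2d+1-p_i-p_j\ge 3,\quad |p_i-p_j|\le d-3.
\]
If $(i,j)\in E_{1,p}$ with $p\in\{1,2\}$ then $\{p_i,p_j\}=\{1,p\}$ gives $p_i+p_j\le 3$, contradicting the first inequality; if $(i,j)\in E_{d,p}$ with $p\in\{d-1,d\}$ then $p_i+p_j\ge 2d-1$, contradicting the second. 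The main obstacle is the weight bookkeeping for $\tilde{A}_{i,j}\tilde{B}_j$ and $\tilde{A}_{i,j}\tilde{B}_i\tilde{B}_j$, where one must be careful about which $Z$'s on the two vertex neighborhoods cancel and which appear anew; once the four weight formulas are in hand the proposition follows by inspection.
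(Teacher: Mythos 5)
Your strategy and computations coincide with the paper's: both proofs exhibit the four logical operators $\tilde{A}_{i,j}$, $\tilde{A}_{i,j}\tilde{B}_i$, $\tilde{A}_{i,j}\tilde{B}_j$, $\tilde{A}_{i,j}\tilde{B}_i\tilde{B}_j$ and observe that one of them has weight at most $2$ whenever the edge lies in a forbidden class, and your explicit weight formulas $p_i+p_j-1$, $p_j+d-p_i$, $p_i+d-p_j$, $2d+1-p_i-p_j$ are correct (for a simple graph, so the three pieces of the support are indeed disjoint). The one genuine issue is that your final case check proves less than the proposition asserts. The claim is $E_{1,p}=E_{d,p}=\emptyset$ for all four values $p\in\{1,2,d-1,d\}$, which includes the mixed classes $E_{1,d-1}$, $E_{1,d}=E_{d,1}$ and $E_{d,2}=E_{2,d}$; your concluding sentence only disposes of $E_{1,1},E_{1,2}$ (via $p_i+p_j\ge 4$) and $E_{d,d-1},E_{d,d}$ (via $p_i+p_j\le 2d-2$). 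The mixed classes are exactly the ones killed by your third inequality $\lvert p_i-p_j\rvert\le d-3$: for an edge in $E_{1,d}$, $E_{1,d-1}$ or $E_{2,d}$ one has $\lvert p_i-p_j\rvert\in\{d-1,d-2\}$, so $\tilde{A}_{i,j}\tilde{B}_i$ or $\tilde{A}_{i,j}\tilde{B}_j$ has weight at most $2$. This is not cosmetic: the downstream Lemmas~\ref{lemma:d<=5} and~\ref{lemma:d=6} use precisely the fact that a singular edge $e=\omega(i,1)$ or $e=\omega(i,d)$ must satisfy $e=\omega(j,p)$ with $p\notin\{1,2,d-1,d\}$, which requires the mixed cases. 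Adding that one sentence closes the gap and makes your argument match the paper's proof.
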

\begin{proof}
Consider an edge $(i,j)$. If $(i,j)\in E_{1,p}$ with $p=1,2$
then $\tilde{A}_{i,j}$ has weight$\le 2$.
If $(i,j)\in E_{1,p}$ with $p=d,d-1$
then $\tilde{A}_{i,j}\tilde{B}_i$ or $\tilde{A}_{i,j}\tilde{B}_j$   has weight$\le 2$.
If $(i,j)\in E_{d,p}$ with $p=1,2$ then $\tilde{A}_{i,j}\tilde{B}_i$ or $\tilde{A}_{i,j}\tilde{B}_j$   has weight$\le 2$.
If $(i,j)\in E_{d,p}$ with $p=d,d-1$ then $\tilde{A}_{i,j}\tilde{B}_i\tilde{B}_j$ has weight$\le 2$.
\end{proof}
Below we say  that an edge is {\em singular} if it belongs to $E_{1,p}$ or $E_{d,p}$ for some
$p\in [d]$.

\begin{lemma}
\label{lemma:d<=5}
Suppose the interaction graph $G$ has degree $d\le 5$.
Then the Superfast Encoding does not correct all single-qubit errors.
\end{lemma}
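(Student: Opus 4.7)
The plan is to argue by contradiction, leveraging Proposition~\ref{prop:P1} as a structural bottleneck. Assume the Superfast Encoding does correct all single-qubit errors on a $d$-regular graph $G$ with $d\le 5$. Proposition~\ref{prop:P1} then forces $E_{1,p}=E_{d,p}=\emptyset$ whenever $p\in\{1,2,d-1,d\}$. I will split into the cases $d\le 4$ and $d=5$ and derive a counting contradiction in each.

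For $d\le 4$ the set $\{1,2\}\cup\{d-1,d\}$ already covers all of $[d]$, so the proposition implies $E_{1,p}=\emptyset$ for every label $p\in[d]$. But each vertex $i\in V$ carries an incident edge $\omega(i,1)$, whose other endpoint must use some label $p\in[d]$, so $\omega(i,1)\in E_{1,p}$ is nonempty — a direct contradiction as long as $|V|\ge 1$. (This recovers the main-text observation on low-degree vertices in a uniform, slightly cleaner way.)

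The interesting case is $d=5$, where Proposition~\ref{prop:P1} leaves only the pair $\{1,3\}$ allowed for a label-$1$ endpoint and only $\{3,5\}$ allowed for a label-$5$ endpoint. Define
\[
L_1=\{\omega(i,1):i\in V\}, \qquad L_5=\{\omega(i,5):i\in V\}.
\]
Since $E_{1,1}=E_{5,5}=\emptyset$ no edge has the label $1$ (respectively $5$) at both endpoints, so the maps $i\mapsto \omega(i,1)$ and $i\mapsto \omega(i,5)$ are injective and $|L_1|=|L_5|=|V|$. Since $E_{1,5}=\emptyset$ the two collections are disjoint, so $|L_1\cup L_5|=2|V|$. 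The key point is then a double count of ``label-$3$ incidences'', i.e.\ pairs $(e,v)$ with $v$ an endpoint of $e$ and $\omega(v,3)=e$: every edge in $L_1\cup L_5$ contributes exactly one such incidence at its non-$\{1,5\}$ endpoint (since the only permitted label at that endpoint is $3$), yielding at least $2|V|$ distinct incidences; on the other hand each vertex supplies exactly one label-$3$ incidence, for a total of $|V|$. The inequality $2|V|\le |V|$ forces $|V|=0$, contradicting our standing assumption that $G$ is a nonempty graph.

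The only real obstacle is making sure the two ``label-$3$ incidence'' counts refer to the same object and that no incidence is double-counted; both follow cleanly once one records that (i) an edge in $L_1\cup L_5$ has label in $\{1,5\}$ at one endpoint and must therefore have label $3$ at the other, and (ii) $L_1\cap L_5=\emptyset$ by $E_{1,5}=\emptyset$. Combining the two cases proves Lemma~\ref{lemma:d<=5}.
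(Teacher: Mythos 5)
Your proof is correct and takes essentially the same route as the paper's: both use Proposition~\ref{prop:P1} to force the opposite endpoint of every label-$1$ or label-$d$ edge to carry label $3$ (which is outright impossible for $d\le 4$), and then derive the contradiction $2|V|\le |V|$ by counting these edges against the single label-$3$ slot available at each vertex. Your $L_1$, $L_5$ and incidence bookkeeping merely makes explicit the injectivity and disjointness facts that the paper's ``singular edge'' count leaves implicit.
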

\begin{proof}
Assume the contrary and show that this leads to  a contradiction.
Note that every vertex
$i$ has at least two incident singular edges, namely, $\omega(i,1)$ and $\omega(i,d)$.
Thus the total number of singular edges is at least $2|V|$.
Here we noted that $E_{1,1}=E_{1,d}=E_{d,d}=\emptyset$ by Proposition~\ref{prop:P1}.

On the other hand, 
suppose $e=(i,j)$ is a singular edge such that 
$e=\omega(i,1)$ or $e=\omega(i,d)$.
By Proposition~\ref{prop:P1} $e=\omega(j,p)$ where $p\ne 1,2$
and $p\ne d,d-1$. This is only possible if $d=5$ and $p=3$.
Thus the total number of singular edges is at most $|V|$.
This is a contradiction.
\end{proof}

\begin{lemma}
\label{lemma:d=6}
Suppose the interaction graph $G$ has degree $d=6$.
Then the Superfast Encoding does not correct all single-qubit errors.
\end{lemma}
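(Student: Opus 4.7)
The plan is to sharpen the counting argument from Lemma~\ref{lemma:d<=5} by combining it with a weight analysis of two-edge logical operators.

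First, apply Proposition~\ref{prop:P1} with $d=6$: the sets $E_{1,p}$ and $E_{6,p}$ are empty for $p\in\{1,2,5,6\}$, so every singular edge lies in $E_{1,3}\cup E_{1,4}\cup E_{3,6}\cup E_{4,6}$. Each vertex has exactly one label-$1$ and one label-$6$ incident edge (distinct, because $E_{1,6}=\emptyset$), giving $2|V|$ singular endpoint-incidences in total. Since each singular edge carries exactly one such incidence, there are precisely $2|V|$ singular edges. These same edges contribute $2|V|$ incidences of labels $3$ or $4$ at their other endpoints, which already saturates the total count ($|V|$ for each of labels $3$ and $4$). Consequently \emph{every} label-$3$ edge and every label-$4$ edge is singular, i.e.\ for each vertex $j$ the edges $\omega(j,3)$ and $\omega(j,4)$ have labels in $\{1,6\}$ at their opposite endpoints.

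Fix a vertex $j$, let $a_j,b_j\in\{1,6\}$ denote those opposite-end labels of $\omega(j,3)$ and $\omega(j,4)$, and set $(i,j)=\omega(j,3)$, $(j,k)=\omega(j,4)$. Assuming $G$ is simple we have $i\neq k$, so the four Pauli operators
\[
L_{\alpha,\beta} \;=\; \tilde{A}_{i,j}\,\tilde{A}_{j,k}\,\tilde{B}_i^{\alpha}\,\tilde{B}_k^{\beta},\qquad \alpha,\beta\in\{0,1\},
\]
are all logical operators of the Superfast code (they correspond to the nontrivial even fermionic operators $\pm c_{2i}c_{2k}\,B_i^{\alpha}B_k^{\beta}$). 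A direct calculation from Eq.~\eqref{tilde_A_ij}, using crucially that the labels $3,4$ at $j$ are \emph{consecutive} so that the $Z$-strings of $\tilde{A}_{i,j}$ and $\tilde{A}_{j,k}$ cancel on every other edge incident to $j$, yields
\[
|L_{0,0}|=a_j+b_j,\qquad |L_{1,0}|=7-a_j+b_j,
\]
\[
|L_{0,1}|=7+a_j-b_j,\qquad |L_{1,1}|=14-a_j-b_j.
\]
The four possibilities $(a_j,b_j)\in\{1,6\}^2$ give weight exactly $2$ to $L_{0,0}$, $L_{1,0}$, $L_{0,1}$, $L_{1,1}$ respectively, so in every case one of these four operators is a weight-$2$ logical operator, contradicting the assumption that the Superfast Encoding corrects all single-qubit errors.

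The main obstacle is the saturation step that forces every label-$3$ and label-$4$ edge to be singular; once this is in hand, the weight calculation for $L_{\alpha,\beta}$ succeeds precisely because the adjacent labels $3,4$ at the middle vertex $j$ allow maximal cancellation of the $Z$-strings there, producing a weight-$2$ operator tailored to each of the four possible vertex patterns $(a_j,b_j)$.
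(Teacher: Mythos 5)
Your proof is correct and follows essentially the same route as the paper's: Proposition~\ref{prop:P1} plus the saturation count forces $\omega(j,3)$ and $\omega(j,4)$ to be singular at every vertex $j$, and the product of the two corresponding edge operators, dressed with $\tilde{B}$ at the far endpoints to cancel the long $Z$-strings, is a weight-$2$ logical operator. The only difference is that you assume $G$ is simple, whereas the paper's proof also treats the degenerate case where $\omega(j,3)$ and $\omega(j,4)$ form a double edge to the same neighbor $j$, where instead $\tilde{A}_e\tilde{A}_f\tilde{B}_j$ (equivalent to $\tilde{B}_j$ modulo the two-edge loop stabilizer) is the weight-$2$ logical operator.
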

\begin{proof}
Assume the contrary and show that this leads to a contradiction.
The same argument as above shows that  the total number of singular edges is at least $2|V|$.
On the other hand, 
suppose $e=(i,j)$ is a singular edge such that 
$e=\omega(i,1)$ or $e=\omega(i,d)$.
By Proposition~\ref{prop:P1} $e=\omega(j,p)$ where $p\ne 1,2$
and $p\ne d,d-1$. This is only possible if  $p=3$ or $p=4$.
Thus the number of singular edges is at most $2|V|$.
This is only possible if there are exactly $2|V|$ singular edges
and every vertex $i$ has exactly four incident singular edges, namely,
$\omega(i,1)$, $\omega(i,d)$, $\omega(i,3)$, $\omega(i,4)$.

Consider some vertex $i$ and edges
$e=\omega(i,3)$, $f=\omega(i,4)$ incident to $i$.
The above shows that $e$ and $f$ are singular. 
Let $e=(i,j)$ and $f=(i,k)$ for some vertices $j,k\in V$.
Consider two cases.

\noindent
{\em Case~1:} $j\ne k$. 
Then one of the operators
\[
\tilde{A}_{i,j} \tilde{A}_{i,k},  \quad
\tilde{A}_{i,j} \tilde{A}_{i,k} \tilde{B}_j, \quad
\tilde{A}_{i,j} \tilde{A}_{i,k} \tilde{B}_k, \quad
\tilde{A}_{i,j} \tilde{A}_{i,k} \tilde{B}_j \tilde{B}_k
\]
acts non-trivially only on the qubits $e,f$.
Since these are logical operators, we get a contradiction.

\noindent
{\em Case~2:} $j=k$. Then $e=\omega(j,1)$, $f=\omega(j,d)$
or $e=\omega(j,d)$, $f=\omega(j,1)$. In both cases the operator
\[
\tilde{A}_e \tilde{A}_f \tilde{B}_j
\]
acts non-trivially only on the qubits $e,f$.
This is a contradiction since  $\tilde{A}_e \tilde{A}_f \tilde{B}_j$ is a logical operator.
\end{proof}

\section{2D Hubbard Model}
\label{app:Hubbard}
Here we derive an encoding for the Hubbard model using the original Superfast Algorithm that incorporates single-qubit error correction. Note that the distance of the graph necessary to do this is $d=8$.
We use the same lattice structure as the one used in the main texdt for the GSE, e.g. two square lattices of opposite spins connected by vertical edges. The Hamiltonians for the two lattices are given in Eq.~\eqref{eq:hub_ham} in the main text. 

 \begin{figure}[h]
	\includegraphics[width=9cm]{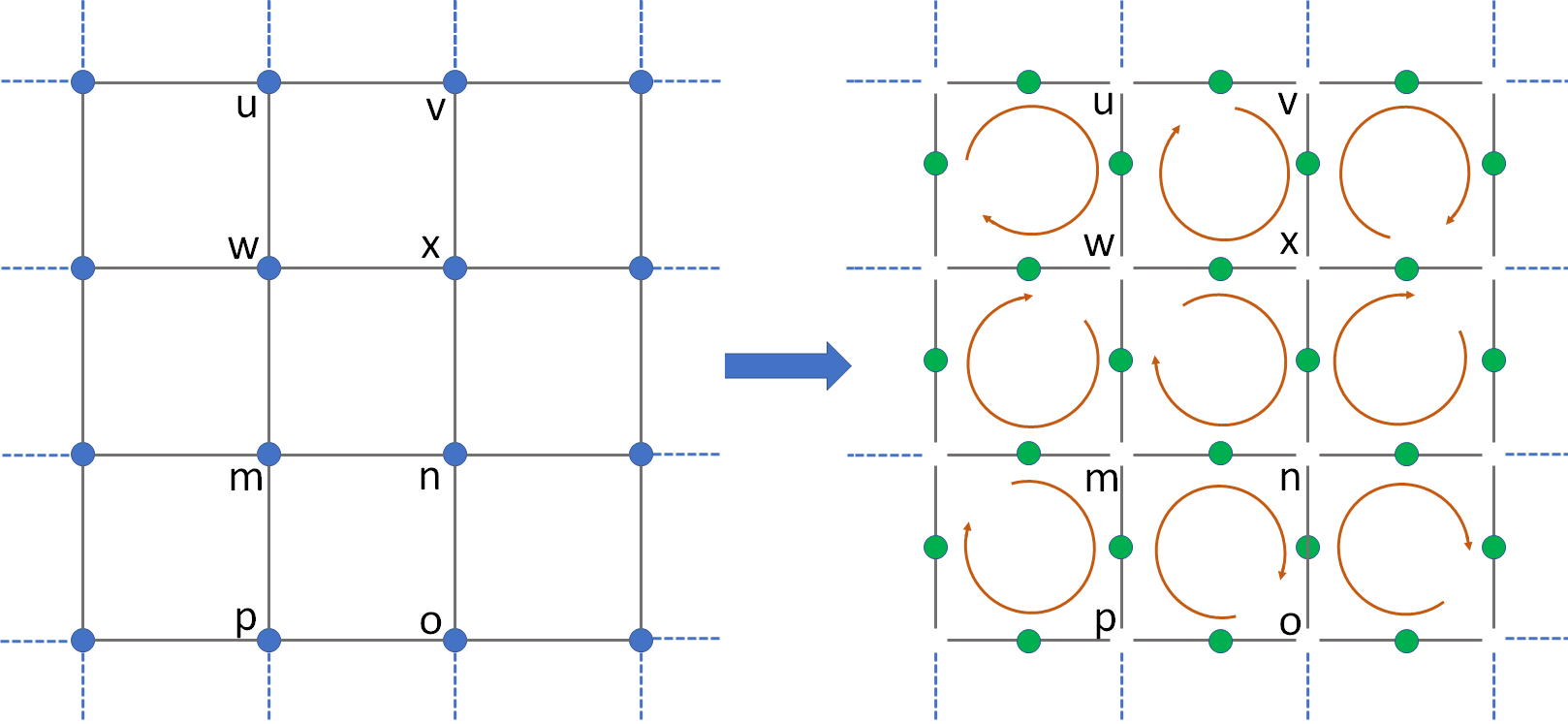}
	\caption{Encoding of the Hubbard model using the Superfast Encoding.}
    \label{fig:Hubbard_mt}
\end{figure}

We transform the creation and annihilation operators of each spin square lattice to edge operators 
The graph $G$ describing one of the two spin lattices for the 2D Hubbard model is shown on Fig.~\ref{fig:Hubbard_mt}. Fermi modes (blue dots) lie on the vertices and the edges represent hopping operators. Qubits of the Superfast Encoding live on edges of the lattice (green dots). The relevant stabilizer operators correspond to the elementary  loops (plaquettes).
For example, the loop $\zeta=(u,v,x,w)$ shown on Fig.~\ref{fig:Hubbard_mt} 
gives rise to a stabilizer
\[
  \tilde{A}(\zeta)=\tilde{A}_{u,v}\tilde{A}_{v,x}\tilde{A}_{x,w}\tilde{A}_{w,u}
   =X_{uv} X_{vx} X_{xw} X_{wu}\cdots 
\]
where the dots represent a product of Pauli $Z$ on some edges
incident to $u,v,x,w$ that 
depend on the chosen edge ordering. Let $\calS$ be the stabilizer group
generated by all loop operators.

There are three distinct terms that appear in the Hubbard model, excitation term, number operator term, and the Coloumb operator term. Based on the expressions found in \cite{Setia2017}, we know the edge operator representation of all the three terms that appear in the Hubbard model. Therefore, for $H_{\uparrow}$ we get:
$$H_{\uparrow}=-t\sum_{ij}\frac{-i}{2}(A_{ij\uparrow}B_{j\uparrow}+B_{i\uparrow}A_{ij\uparrow})+\epsilon\sum_{i}(\frac{1-B_{i\uparrow}}{2})$$

The spin-density interaction terms are given by:
$$U\sum_{i}n_{i \uparrow}n_{i\downarrow}=U\sum_{ij}(\frac{1-B_{i\uparrow}}{2})(\frac{1-B_{i\downarrow}}{2})$$

\begin{figure}[h]
	\includegraphics[width=6cm]{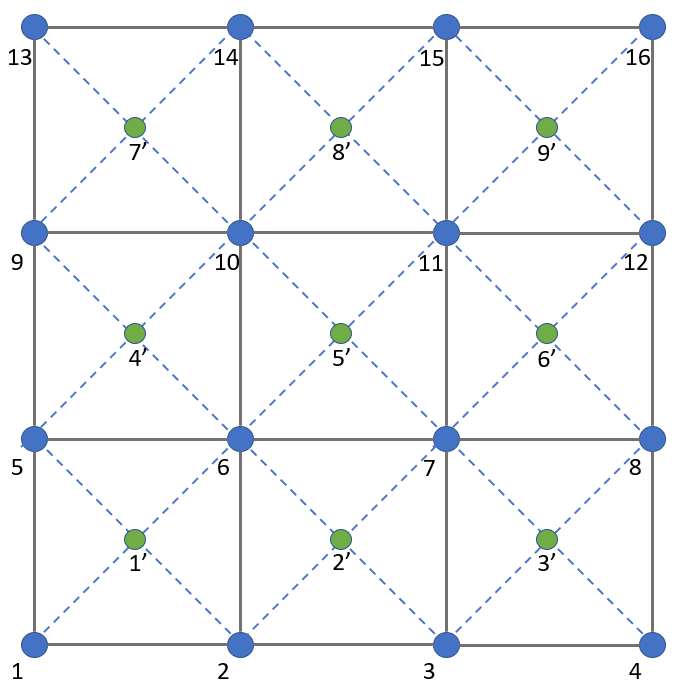}
	\caption{$4\cross4$ Hubbard model lattice with auxiliary modes inserted. The blue vertices are the modes present due to the original problem and the green vertices correspond to the auxiliary modes introduces for error correction. The vertical and horizontal solid line edges correspond to the original fermionic problem Hamiltonian. The dashed lines correspond to the edges introduced due to the auxiliary mode.}
    \label{fig:bksf_aux_modes}
\end{figure}

As discussed in Section~\ref{app:lackEC}, the set of stabilizers available proves to be insufficient
to correct all the single-qubit errors in the Hubbard model, if we do not introduce auxiliary ancillary modes. 
These auxiliary modes contribute to edges in the graph but do not have fermionic terms appearing in the target  Hamiltonian. For each plaquette  in the original lattice we introduce an auxiliary mode at its center which `interacts' with all the vertices of the plaquette, see  
Fig.~\ref{fig:bksf_aux_modes}. We get four extra edges per one auxiliary mode in the model, which give us four smaller independent stabilizer loops. We can then the $B_i$ vertex operators at the auxiliary mode as stabilizers. 

To prove that the code corrects all single-qubit Pauli errors it suffices to show that
each single-qubit error has a unique syndrome.
From Eqs.~(\ref{eq:B_i},\ref{eq:A_ij}) it is easy to see that the ordering of the edges will affect the analytical expressions of the stabilizers. This in turn affects whether it is possible to get unique syndromes for all the single qubit errors or not. The ordering that we used is given on a unit cell in Fig.~\ref{fig:bksf_aux_modes}. The fermionic modes in the original problem are represented with numbers without dashes, while auxiliary modes are numbered with a dash. We use the ordering $1'<2'<3'..<1<2<3..$ . The bottom most row is numbered from left to right and then the numbering continues for the rows above it. So, the mode numbers increase from left to right and from bottom to top. In
 Fig.~\ref{fig:bksf_aux_modes}, for any mode, the mode left and above it will have a higher index. Due to the ordering choice, we can prove single-qubit error correction for a unit cell in terms of stabilizers around it. Indeed, it is easy to check that all single-qubit errors in the unit cell '6-7-11-10'
have distinct syndromes. 
Note that this encoding requires $12s$ qubits, where $s$ is the number of sites in the original lattice.

\section{Fermionic operators in terms of edge operators} \label{app:ferm_op_table}
Operators $A_{ij}$ and $B_i$ can generate the algebra of all even fermionic operators. This is the case for a generic quantum chemistry Hamiltonian
\be
H=\sum_{ij}h_{ij}a_{i}^{\dagger}a_j+\sum_{ijkl}h_{ijkl}a_i^{\dagger}a_{j}^{\dagger}a_ka_l.
\ee
Here, there are five different types fermionic operators namely, number operator $a_{i}^{\dagger}a_i$, excitation operator $a_i^{\dagger}a_j$, number excitation operator $a_i^{\dagger}a_j^{\dagger}a_ja_k$, Coloumb operator $a_i^{\dagger}a_j^{\dagger}a_ja_i$, double excitation operator $a_i^{\dagger}a_j^{\dagger}a_ka_l$. Their expression in terms of edge operators are given in the Table \ref{Table:operators}. We have also included derivation of a superconductivity interaction of the form
$a^{\dagger}_{i}a_{j}^{\dagger}+a_ia_{j}$.

\begin{center}
\begin{table}[h]
	\caption{Edge operator representation for even fermionic operators}
	\label{Table:operators}
	\begin{tabular}{c c}
		\hline
		\hline
		\textbf{Second quantized form} & \textbf{Edge Operator Representation} \\
		\hline
	    $a^{\dagger}_{i}a_{i}$&$(1-B_{i})/2$\\
	    $a^{\dagger}_{i}a^{\dagger}_{j}a_{j}a_{i}$&$\left(1-B_{i}\right)\left(1-B_{j}\right)/4$\\
		$(a^{\dagger}_{i}a_{j}+a^{\dagger}_ja_{i})$& $-\ii (A_{ij}B_{j}+B_{i}A_{ij})/2$\\
		$(a^{\dagger}_{i}a^{\dagger}_{j}a_{j}a_{k}+a^{\dagger}_{k}a^{\dagger}_{j}a_{j}a_{i})$&$-\ii (A_{ik}B_{k}+B_{i}A_{ik})(1-B_{j})/4$\\
		$(a^{\dagger}_{i}a_{j}^{\dagger}+a_ia_{j})$& $-\ii (A_{ij}B_{j}-B_{i}A_{ij})/2$\\
		$(a^{\dagger}_{i}a^{\dagger}_{j}a_{k}a_{l}+a^{\dagger}_{l}a^{\dagger}_{k}a_{j}a_{i})$&$A_{ij}A_{kl}(-1-B_iB_j+B_iB_k+B_iB_l+$\\
	&$B_jB_k+B_jB_l-B_kB_l-B_iB_jB_kB_l)/8$\\
		\hline
		\hline
	\end{tabular}
\end{table}
\end{center}

\end{document}